\documentclass[conference]{IEEEtran}
\IEEEoverridecommandlockouts

\usepackage{cite}
\usepackage{amsmath,amssymb,amsfonts}
\usepackage{amsthm}
\usepackage{algorithmic}
\usepackage{algorithm}
\usepackage{graphicx}
\usepackage{textcomp}
\usepackage{xcolor}
\usepackage{booktabs}
\usepackage{multirow}
\usepackage{hyperref}

\newtheorem{proposition}{Proposition}
\newtheorem{theorem}{Theorem}
\newtheorem{corollary}{Corollary}

\def\BibTeX{{\rm B\kern-.05em{\sc i\kern-.025em b}\kern-.08em
    T\kern-.1667em\lower.7ex\hbox{E}\kern-.125emX}}

\begin{document}

\title{Priority-Aware Routing for Quantum Networks:\\
Integrating Coherence-Time Constraints into Scheduling}

\author{
\IEEEauthorblockN{
    Sadhgun Ram Dasi\IEEEauthorrefmark{1},\quad
     Aswath Babu H\IEEEauthorrefmark{2}
}

\IEEEauthorblockA{
    \IEEEauthorrefmark{1}Department of Electronics and Communication Engineering\\
    Indian Institute of Information Technology Dharwad\\
    Dharwad, India\\
    \textit{sadhgunramdasi@gmail.com}
}

\IEEEauthorblockA{
    \IEEEauthorrefmark{2}Department of Arts, Science and Design\\
    Indian Institute of Information Technology Dharwad\\
    Dharwad, India\\
    \textit{aswath@iiitdwd.ac.in}
}
}

\maketitle

\begin{abstract}
Quantum networks face a fundamental challenge absent in classical
networks: finite memory coherence times mean that queuing delay
directly degrades information quality, causing decoherence that
has no classical analog. Existing routing protocols optimize for
hop count or channel fidelity while treating scheduling delays
as a secondary concern, leaving fidelity vulnerable to
congestion-induced decoherence.

We present a coherence-aware routing protocol that integrates
decoherence constraints into path selection via a
priority-modulated aging model derived analytically from
Weighted Fair Queueing (WFQ) theory, where high-priority
traffic receives compressed effective queue delay. We build
a custom discrete-event simulator---verified against NetSquid
($\mathrm{MAE} < 10^{-6}$ across $10^4$ states)---and
evaluate across two structurally diverse topology classes under
identical physical parameters, isolating topology structure as
the sole independent variable. In all experiments, our protocol
is benchmarked against loss-only Dijkstra routing and FIFO
scheduling as baselines.

On Erd\H{o}s--R\'{e}nyi topologies, $P_2$ traffic achieves
$96.08\%$ fidelity under a $9\times$ load increase with only
$0.04$ percentage point degradation, versus a $13.51$ point
collapse for the loss-only baseline; $P_2$ tail latency holds
stable at $0.055$~ms while Dijkstra degrades $5.5\times$.
On Barab\'{a}si--Albert scale-free topologies, our protocol
delivers a $+4.12$ percentage point fidelity advantage with
$97.8\%$ higher aggregate throughput at operational loads
over the baseline, with a clearly characterized hub-saturation
boundary at 160k~QPS. FIFO scheduling provides negligible
priority differentiation on both topology classes, providing
strong evidence that coherence-aware routing---not scheduling
alone---is responsible for the observed QoS gains.
\end{abstract}

\begin{IEEEkeywords}
Quantum networking, quantum routing, coherence-aware scheduling,
priority queuing, weighted fair queueing, cross-layer protocol
design, decoherence modeling, fidelity preservation, Lindblad
master equation, quality of service, discrete-event simulation,
NetSquid, NISQ networks, quantum memory, near-term quantum networks,
scale-free topology
\end{IEEEkeywords}

\section{Introduction}
\label{sec:intro}

\IEEEPARstart{T}{he} development of quantum internet infrastructure
relies on the ability to distribute entanglement and transmit
quantum information across remote nodes with sufficient
fidelity~\cite{kimble2008quantum, wehner2018quantum}. Applications
ranging from distributed quantum computing to quantum key
distribution require the transmission of qubits over channels that
may introduce loss and through intermediate nodes with finite
storage capabilities. Unlike classical networks where data can
typically be buffered without degradation, quantum networks must
address the store-and-decay problem.

Quantum memories---whether based on trapped
ions~\cite{wineland2013quantum}, nitrogen-vacancy (NV) centers in
diamond~\cite{doherty2013nitrogen}, or superconducting
circuits~\cite{kjaergaard2020superconducting}---lose coherence over
time due to environmental interactions~\cite{preskill1998quantum}.
This decoherence can manifest as dephasing, amplitude damping, or
depolarization, characterized by timescales $T_1$ (energy
relaxation) and $T_2$ (phase coherence)~\cite{nielsen2010quantum}.
A notable characteristic of quantum networks is the relationship
between time and information quality: delays from routing
decisions, scheduling, or network congestion can lead to reduced
fidelity and potential information loss.

\subsection{Motivation and Challenge}

Existing quantum routing protocols typically optimize for hop
count~\cite{schoute2016shortcuts} or entanglement generation
rate~\cite{pant2019routing}, implicitly treating scheduling
delays as a secondary concern. Under congestion, however, packets
traversing even optimal paths accumulate queue waiting time that
directly erodes fidelity through decoherence.

Classical Quality of Service (QoS) mechanisms---such as priority
queuing~\cite{demers1989analysis, tanenbaum2011computer} or
weighted fair queuing~\cite{parekh1993generalized}---typically
operate at the scheduling layer. Recent work has explored quantum network protocol design~\cite{kozlowski2020designing}
and congestion-aware entanglement routing~\cite{shi2020concurrent},
though priority is typically treated as a logical label rather
than a parameter that influences path selection.

This raises the question: how might routing protocols be designed
to account for quantum states that degrade during transmission,
and does such a protocol generalize across structurally diverse
network topologies?

\section{Related Work}
\label{sec:related}

Fidelity-aware quantum routing has been explored through
entanglement generation rate optimization~\cite{pant2019routing},
graph-theoretic path selection~\cite{schoute2016shortcuts}, and
MDP-based approaches~\cite{caleffi2017optimal}, but none model
time-dependent decoherence during transit.
Chakraborty et al.~\cite{chakraborty2019distributed} treat
priority as a logical label rather than a routing cost; we make it
a first-class variable through the WFQ-grounded aging model.
At the link layer, Dahlberg et al.~\cite{dahlberg2019link} treat
scheduling and routing as independent concerns, while our work
integrates them. On the queuing side, Vardoyan et
al.~\cite{vardoyan2019stochastic} provide stochastic foundations
for quantum switch behavior that our cost model builds on.

From classical networking, our priority classes mirror
DiffServ~\cite{blake1998architecture} and the aging model is
formally grounded in WFQ~\cite{demers1989analysis,
parekh1993generalized}. The key departure is that in quantum
networks, service differentiation must happen at the routing
layer---path selection determines the decoherence budget, a
dependency with no direct classical analogue.

\subsection{Contributions}

This paper makes the following contributions:

\begin{enumerate}
    \item \textbf{WFQ Aging Model:} We propose a
    priority-modulated aging model formally derived from Weighted
    Fair Queueing theory~\cite{demers1989analysis,
    parekh1993generalized}, where the $(1+P)^{-1}$ scaling factor
    is analytically justified as a first-order approximation of
    WFQ-induced delay compression.

    \item \textbf{Integrated Protocol Design:} We implement a
    routing cost function that considers predicted queue waiting
    times and coherence-time constraints, allowing paths to be
    selected based on both channel quality and temporal
    feasibility. We establish formal complexity bounds
    (Proposition~\ref{prop:complexity}) and a minimum coherence
    time guarantee (Theorem~\ref{thm:coherence_bound}).

    \item \textbf{Stress Regime Analysis:} We evaluate the system
    in a stress regime ($T_{\text{mem}} = 400~\mu$s), where
    coherence times are sufficient for multi-hop routing but short
    enough that queuing delays are critical.

    \item \textbf{Rigorous Baseline Comparison:} We compare against
    FIFO scheduling, a loss-only Dijkstra baseline ($\beta = 0$),
    and classical priority queuing without coherence-aware routing,
    isolating the contribution of coherence-aware path selection.
    We demonstrate that FIFO provides \emph{zero} priority
    differentiation and that loss-only routing collapses below the
    fidelity utility threshold $F_{\min}$ under moderate load.

    \item \textbf{Two-Topology Generalization Study:} We evaluate
    the protocol on both Erd\H{o}s-R\'enyi random
    graphs~\cite{erdos1960evolution} and Barab\'asi--Albert
    scale-free graphs~\cite{barabasi1999emergence} under identical
    physical parameters, isolating topology structure as the sole
    independent variable and characterizing a topology-dependent
    operational envelope for coherence-aware routing.
\end{enumerate}

The paper is organized as follows:
Section~\ref{sec:background} reviews decoherence modeling;
Section~\ref{sec:system_model} introduces the priority-aware
system model;
Section~\ref{sec:protocol} describes the routing protocol;
Section~\ref{sec:simulator} presents the simulator architecture;
Section~\ref{sec:experiments} details experimental methodology
and results;
Section~\ref{sec:discussion} discusses observed behaviors;
Section~\ref{sec:limitations} addresses limitations; and
Section~\ref{sec:conclusion} concludes with future directions.

\section{Background: Modeling Decoherence}
\label{sec:background}

To incorporate coherence considerations into routing, we model
quantum memory noise using established methods. This section
reviews the fidelity evolution model used for packet drop decisions
and cost computation.

\subsection{Density Matrix Evolution}

Let $\rho(t)$ represent the density matrix of a qubit stored in
memory at time $t$. Following standard approaches, we model the
dominant noise as $T_2$ dephasing, characteristic of
superconducting qubits~\cite{kjaergaard2020superconducting} and
solid-state memories~\cite{doherty2013nitrogen}. The evolution is
described by the Lindblad master
equation~\cite{lindblad1976generators, gorini1976completely}:
\begin{equation}
\frac{d\rho}{dt} = \gamma \left( Z\rho Z^\dagger - \rho \right)
\label{eq:lindblad}
\end{equation}
where $\gamma = 1/(2T_{\text{mem}})$ is the dephasing rate and
$Z = \begin{pmatrix} 1 & 0 \\ 0 & -1 \end{pmatrix}$ is the
Pauli-Z operator.

For discrete-event simulation, we use the operator-sum
representation (Kraus
decomposition)~\cite{nielsen2010quantum, kraus1983states}:
\begin{equation}
\rho(t + \Delta t) = \mathcal{E}(\rho(t)) =
\sum_{k=0}^{1} K_k \rho(t) K_k^\dagger
\label{eq:kraus}
\end{equation}
with Kraus operators:
\begin{align}
K_0 &= \sqrt{1 - p(\Delta t)} \, I \label{eq:kraus0} \\
K_1 &= \sqrt{p(\Delta t)} \, Z \label{eq:kraus1}
\end{align}
where the phase-flip probability is:
\begin{equation}
p(\Delta t) = \frac{1 - e^{-\Delta t / T_{\text{mem}}}}{2}
\label{eq:phase_flip}
\end{equation}

\subsection{Fidelity Decay and Threshold}

We define fidelity with respect to the ideal initial state
$|\psi\rangle$ as $F(t) = \langle\psi|\rho(t)|\psi\rangle$~%
\cite{jozsa1994fidelity, uhlmann1976transition}. For a pure
initial state undergoing pure dephasing, the fidelity decays
exponentially:
\begin{equation}
F(t) = \frac{1 + e^{-t/T_{\text{mem}}}}{2}
\label{eq:fidelity_decay}
\end{equation}

\noindent\textbf{Remark:} Equation~\eqref{eq:fidelity_decay} holds
exactly for pure initial states under pure $T_2$ dephasing. Mixed
states or amplitude damping ($T_1$ processes) yield more complex
evolution; this is a known simplification noted further in
Section~\ref{sec:limitations}.

To maintain utility for applications such as quantum
teleportation~\cite{bennett1993teleporting}, we set a fidelity
threshold following established
guidelines~\cite{massar1995optimal}:
\begin{equation}
F(t) < F_{\min} = 0.70 \quad \Rightarrow \quad \text{Discard Packet}
\label{eq:threshold}
\end{equation}
This threshold lies above the classical fidelity bound of $2/3$
for single-qubit teleportation~\cite{massar1995optimal}, below
which quantum transmission offers no advantage over classical
communication.

\subsection{Evaluation Topologies}
\label{sec:topologies}

We evaluate the Coherence-Aware (CA) routing protocol on two
distinct synthetic network topologies, each designed to expose
different structural challenges in quantum network routing.

\subsubsection{Erdős–Rényi Random Topology}
The first topology is an Erdős–Rényi (ER) random
graph~\cite{erdos1960evolution}, used in the load-sweep (Suite~I) and
multi-seed robustness (Suite~II) evaluations. In this model, $N =
30$ nodes are placed uniformly at random over a $5\,\text{km}
\times 5\,\text{km}$ geographical area, and each pair of nodes is
connected independently with probability $p$. We use $p = 0.30$ in
dynamic evaluations and $p = 0.15$ for static graph generation.

ER graphs are statistically homogeneous: degree variance is low,
hub nodes are absent, and multiple disjoint paths of roughly equal
length typically exist between any source--destination pair. This
uniformity makes the ER topology an effective \emph{baseline},
isolating the ability of CA routing to balance traffic across
equal-cost paths before the network saturates uniformly.

\subsubsection{Barabási–Albert Scale-Free Topology}
The second topology is a Barabási–Albert (BA) scale-free
graph~\cite{barabasi1999emergence}, used in the hub-stress
evaluations (Suite~III). BA graphs are constructed via preferential
attachment: each newly added node connects to existing nodes with
probability proportional to their current degree, producing a
power-law degree distribution with a small number of highly
connected hub nodes.

This structure serves as a deliberate \emph{stress test}. Because
most shortest paths necessarily traverse the high-degree hubs, those
nodes become severe bottlenecks under elevated load---a behavior
clearly visible in the observed throughput collapse at $\gtrsim
160{,}000$ QPS in our experimental logs. The BA topology therefore
probes whether CA routing can exploit alternative, hub-avoiding
paths to preserve fidelity until no such detour remains.

Together, the ER and BA topologies provide complementary coverage:
the former tests homogeneous saturation behavior, while the latter
tests resilience under heterogeneous, hub-dominated congestion.

\section{Priority-Aware System Model}
\label{sec:system_model}

We model the quantum network as a directed graph $G = (V, E)$,
where each node $v \in V$ represents a quantum repeater with
finite quantum memory, classical control processors, and photonic
interfaces. Each edge $(u, v) \in E$ represents a bidirectional
optical link with photon loss probability $L_{uv}$ and propagation
latency $D_{\text{prop}}(u, v)$.

\subsection{Priority-Based Aging Model}

We introduce a priority-aware aging model grounded in Weighted
Fair Queueing (WFQ) theory~\cite{demers1989analysis,
parekh1993generalized}. Let $P \in \{0, 1, 2\}$ denote the
priority class, where $P = 2$ is highest priority.

\textbf{Weight Mapping.} We map each priority class to a service
weight $w_P = 1 + P$, giving:
$w_0 = 1$ (Low), $w_1 = 2$ (Medium), $w_2 = 3$ (High).

Under WFQ, each class receives service bandwidth proportional to
its weight:
\begin{equation}
R_P = R_{\text{total}} \cdot \frac{w_P}{\sum_k w_k}
    = R_{\text{total}} \cdot \frac{1+P}{6}
\label{eq:wfq_rate}
\end{equation}

\begin{proposition}[Delay Compression]
\label{prop:delay_compression}
Under a WFQ scheduler with weights $w_P = 1 + P$ and total
capacity $R_\mathrm{total}$, the expected queuing delay for
class $P$ satisfies:
\begin{equation}
    E[W_P] \;\approx\; \frac{E[W_\mathrm{FIFO}]}{1+P}
\end{equation}
with approximation error $O(\bar{L}/R_\mathrm{total})$, where
$\bar{L}$ is the mean packet size, valid in the heavy-traffic
regime $\rho \to 1$.
\end{proposition}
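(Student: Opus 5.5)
Our approach is to reduce the WFQ system to its fluid idealization, Generalized Processor Sharing (GPS), analyse the per-class delays there, and then transfer the result back to packetized WFQ. Parekh and Gallager~\cite{parekh1993generalized} show that WFQ departure times differ from GPS departure times by at most $L_{\max}/R_\mathrm{total}$ for every packet. This bound is the source of the $O(\bar{L}/R_\mathrm{total})$ error term, with $L_{\max}$ of the same order as $\bar{L}$ under bounded packet-size variability. It therefore suffices to prove $E[W_P^{\mathrm{GPS}}] \approx E[W_\mathrm{FIFO}]/(1+P)$ to leading order as $\rho \to 1$.

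\textbf{Model assumptions.} We make them explicit: Poisson arrivals per class with rates $\lambda_P$, i.i.d.\ packet sizes with mean $\bar{L}$, and total load $\rho = \sum_P \lambda_P \bar{L}/R_\mathrm{total}$. For the $(1+P)^{-1}$ form we take the symmetric case in which the three classes offer comparable traffic.

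\textbf{Step 1: FIFO baseline.} Write the FIFO (equivalently, any work-conserving) aggregate delay via Pollaczek--Khinchine. This gives $E[W_\mathrm{FIFO}] = \lambda E[S^2]/(2(1-\rho))$, which is of order $(1-\rho)^{-1}$. Work conservation implies that the total backlog under GPS equals the FIFO backlog pathwise. The aggregate workload is therefore identical across the two schedulers.

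\textbf{Step 2: split the workload in heavy traffic.} We invoke state-space collapse for GPS in heavy traffic. As $\rho \to 1$, all classes are backlogged simultaneously on the relevant timescale, so GPS serves class $P$ at exactly its guaranteed rate $R_P = R_\mathrm{total}(1+P)/6$ from \eqref{eq:wfq_rate}. Little's law applied per class then gives
\begin{equation*}
E[W_P^{\mathrm{GPS}}] \approx \frac{E[Q_P]}{R_P}.
\end{equation*}
Each class's backlog $Q_P$ is its share of the common workload, and each class is served at rate proportional to $w_P$. Relative to a unit-weight reference, class $P$'s delay is therefore compressed by the factor $w_{\mathrm{ref}}/w_P = 1/(1+P)$. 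Normalising so that the weight-1 class corresponds to the FIFO delay yields the claim.

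\textbf{Main obstacle.} The hardest step is Step 2: justifying that every class stays backlogged, so that it is served at exactly $R_P$, and fixing the normalisation that identifies the $P=0$ delay with the FIFO delay. In exact GPS, a lightly loaded class inherits spare capacity, and its delay is not a clean $1/w_P$ multiple. I would first try to resolve this by treating the formula as a first-order (leading-term in $(1-\rho)^{-1}$) statement under equal per-class loads. I would then bound the deviation caused by idle periods of individual classes by the lower-order $O(1)$ terms.

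Those $O(1)$ terms are of order $\bar{L}/R_\mathrm{total}$ in delay units and can be absorbed into the stated error. If a clean bound proves elusive, I would state the equal-load, all-backlogged condition as an explicit hypothesis of the approximation.
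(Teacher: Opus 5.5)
Your packetization step (the Parekh--Gallager lateness bound) and Step~1 are fine. Step~2, however, fails outright, and your fallback hypothesis cannot rescue it.

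First, the state-space collapse you invoke goes the other way for these weights. With $\rho_P=\rho/3$, class~2 is guaranteed $R_2=R_\mathrm{total}/2$ but offers only $\rho R_\mathrm{total}/3$. Its workload is therefore pathwise dominated by a single-server queue of rate $R_\mathrm{total}/2$ and load at most $2/3$. Hence $E[W_2]$ stays bounded as $\rho\to1$ while $E[W_\mathrm{FIFO}]=\Theta((1-\rho)^{-1})$, so $E[W_2]/E[W_\mathrm{FIFO}]\to0$ rather than $\tfrac13$. In fact the GPS fluid dynamics drain classes~1 and~2 and concentrate the workload on class~0, the only class whose guaranteed rate $R_\mathrm{total}/6$ falls short of its offered load.

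Second, $E[W_P]\approx E[Q_P]/R_P$ is not Little's law. Little's law divides the class backlog by the class \emph{arrival} rate, $E[Q_P]\approx\lambda_P\bar L\,E[W_P]$. In steady state a class's long-run service rate equals its throughput $\lambda_P\bar L$, so ``served at exactly $R_P$'' forces $\lambda_P\bar L=R_P\propto1+P$. Thus ``equal loads'' and ``every class backlogged and served at $R_P$'' are mutually inconsistent hypotheses.

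Third, and decisively, the normalization identifying the weight-1 class with FIFO contradicts the work conservation of your own Step~1. To leading order in $(1-\rho)^{-1}$, conservation reads $\sum_P\rho_P E[W_P]=\rho\,E[W_\mathrm{FIFO}]$. Substituting $E[W_P]=E[W_\mathrm{FIFO}]/(1+P)$ turns the left side into $E[W_\mathrm{FIFO}]\sum_P\rho_P/(1+P)$, which is strictly less than $\rho\,E[W_\mathrm{FIFO}]$ whenever $\rho_1+\rho_2>0$. The factor is $\tfrac{11}{18}\rho$ for equal loads, $\tfrac12\rho$ for loads proportional to the weights, and about $0.78\rho$ for the $60/30/10$ workload. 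The mismatch is of order $E[W_\mathrm{FIFO}]$, which diverges relative to $\bar L/R_\mathrm{total}$ as $\rho\to1$, so it cannot be absorbed into the stated error. Even if $E[W_P]\propto(1+P)^{-1}$ held under equal loads, conservation would fix the constant at $\tfrac{18}{11}E[W_\mathrm{FIFO}]$, not $E[W_\mathrm{FIFO}]$.

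The paper's proof does not supply the missing step either. It follows the same skeleton: GPS rates $R_\mathrm{total}(1+P)/6$, Kleinrock's conservation law, then the bare assertion that symmetric load and weights $1+P$ give $E[W_P]\propto(1+P)^{-1}$ with the FIFO normalization, with all error attributed to packetization. The conservation law it cites is precisely the identity that rules that normalization out. What survives along either route is the GPS-to-WFQ packetization step. The $(1+P)^{-1}$ compression relative to FIFO would have to be stated as a modelling assumption, which is how the effective-aging rule actually uses it, rather than derived from WFQ in heavy traffic.
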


\begin{proof}
WFQ is a packetized approximation of Generalized Processor
Sharing (GPS)~\cite{parekh1993generalized}. Under GPS with weights
$w_P$, class $P$ receives service bandwidth:
\begin{equation}
    R_P \;=\; R_\mathrm{total} \cdot \frac{w_P}{\sum_k w_k}
    \;=\; R_\mathrm{total} \cdot \frac{1+P}{6}
\end{equation}
where $\sum_k w_k = w_0 + w_1 + w_2 = 1 + 2 + 3 = 6$.
Applying the Kleinrock conservation law~\cite{kleinrock1976queueing}
to an M/G/1 queue under work-conserving scheduling, the
mean unfinished work $U$ is invariant across scheduling
disciplines:
\begin{equation}
    \sum_P \rho_P \, E[W_P] \;=\; \text{const}
\end{equation}
where $\rho_P = \lambda_P / \mu_P$ is the load contributed
by class $P$. Under symmetric load ($\rho_P = \rho/3$ for
each class) and GPS weights $w_P = 1+P$, the per-class
delay satisfies $E[W_P] \propto (1+P)^{-1}$, giving:
\begin{equation}
\label{eq:delay_compression}
    E[W_P] \;=\; \frac{E[W_\mathrm{FIFO}]}{1+P}
    \;+\; O\!\left(\frac{\bar{L}}{R_\mathrm{total}}\right)
\end{equation}
The $O(\bar{L}/R_\mathrm{total})$ packetization error
arises from the gap between GPS and WFQ~\cite{parekh1993generalized}
and vanishes as packet size relative to link capacity
decreases. In our stress-regime experiments
($R_\mathrm{total} \gg \bar{L}$), this term is negligible,
justifying the approximation in Equation~(\ref{eq:delay_compression}).
\end{proof}

\textbf{Translation to Quantum Decoherence.} In a quantum network,
queuing delay directly induces fidelity loss. Because
$\mathbb{E}[W_P]$ is compressed by $(1+P)^{-1}$, a high-priority
qubit spends less physical time in queue. We model this as an
effective aging rate~\eqref{eq:effective_aging}:
\begin{equation}
\Delta t_{\text{eff}} = \frac{\Delta t_{\text{real}}}{1 + P}
\label{eq:effective_aging}
\end{equation}

This ensures that fidelity calculations accurately reflect the
expected physical occupancy time that a WFQ scheduler provides to
each priority class. The model is a scheduling abstraction---
agnostic to the underlying physical mechanism---and is structured
to accommodate hardware-level coherence protection (e.g.,
dynamical decoupling~\cite{viola1998dynamical,
khodjasteh2005fault}) as a future extension.

\subsection{EWMA Queue Estimation}

The estimated queuing delay $\mathbb{E}[Q_v]$ at node $v$ is
maintained via an exponentially weighted moving average (EWMA):
\begin{equation}
\hat{Q}_v(t) = \lambda \cdot q_v(t) + (1 - \lambda) \cdot
\hat{Q}_v(t-1)
\label{eq:ewma}
\end{equation}
where $q_v(t)$ is the observed queue delay at time $t$ and
$\lambda = 0.2$ is the smoothing
factor~\cite{bertsekas1992data}. This provides a low-overhead
estimate without requiring explicit classical state exchange
between nodes.

\subsection{Minimum Coherence Time Bound}

\begin{theorem}[Minimum Coherence Time Bound]
\label{thm:coherence_bound}
For a priority-$P$ packet traversing a path of $k$ hops with mean
per-hop propagation delay $D_{\mathrm{prop}}$ and mean queue delay
$Q$, the minimum memory coherence time required to guarantee
end-to-end fidelity $F \geq F_{\min}$ is:
\begin{equation}
T_{\mathrm{mem}} \geq
\frac{-k\!\left(D_{\mathrm{prop}} +
\dfrac{Q}{1+P}\right)}{\ln(2F_{\min} - 1)}
\label{eq:tmin_bound}
\end{equation}
\end{theorem}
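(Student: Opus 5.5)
The plan is to compute the total effective storage time of the packet along its $k$-hop path, substitute it into the fidelity decay law~\eqref{eq:fidelity_decay}, and invert the inequality $F \geq F_{\min}$ for $T_{\mathrm{mem}}$.

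\textbf{Step 1: total effective age.} At each hop the qubit experiences propagation delay $D_{\mathrm{prop}}$, during which it ages at the real rate, and queue delay $Q$. By Proposition~\ref{prop:delay_compression} and the effective aging rule~\eqref{eq:effective_aging}, the queue delay contributes only $Q/(1+P)$. Summing over the $k$ hops, with the per-hop quantities read as their means, gives
\begin{equation*}
t_{\mathrm{eff}} = k\left(D_{\mathrm{prop}} + \frac{Q}{1+P}\right).
\end{equation*}
The Kraus channels~\eqref{eq:kraus}--\eqref{eq:phase_flip} compose multiplicatively. Successive dephasings with parameters $e^{-t_1/T_{\mathrm{mem}}}$ and $e^{-t_2/T_{\mathrm{mem}}}$ produce coherence factor $e^{-(t_1+t_2)/T_{\mathrm{mem}}}$, because dephasing channels commute and $(1-2p_1)(1-2p_2)=e^{-(t_1+t_2)/T_{\mathrm{mem}}}$. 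Hence the end-to-end fidelity depends only on the summed age:
\begin{equation*}
F = \frac{1+e^{-t_{\mathrm{eff}}/T_{\mathrm{mem}}}}{2}.
\end{equation*}

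\textbf{Step 2: inversion.} The condition $F\ge F_{\min}$ is equivalent to $e^{-t_{\mathrm{eff}}/T_{\mathrm{mem}}}\ge 2F_{\min}-1$. Since $F_{\min}=0.70>1/2$, the right-hand side lies in $(0,1)$, so we may take logarithms. This gives $-t_{\mathrm{eff}}/T_{\mathrm{mem}}\ge \ln(2F_{\min}-1)$, where the logarithm is strictly negative. Multiplying by $T_{\mathrm{mem}}>0$ and then dividing by the negative quantity $\ln(2F_{\min}-1)$ flips the inequality and yields
\begin{equation*}
T_{\mathrm{mem}} \geq \frac{-t_{\mathrm{eff}}}{\ln(2F_{\min}-1)},
\end{equation*}
which is exactly~\eqref{eq:tmin_bound}. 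Conversely, any $T_{\mathrm{mem}}$ below this value violates the fidelity requirement, so the bound is also the minimum.

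\textbf{Main obstacle.} The delicate point is Step 1, namely the passage from per-hop random delays to the deterministic means $D_{\mathrm{prop}}$ and $Q$. Fidelity is a convex function of the delay, so by Jensen's inequality $E[F(t)] \ge F(E[t])$, and the bound computed at mean delays is conservative for expected fidelity. The statement should therefore be read as holding at mean delays, or in expectation. Likewise, the factor $(1+P)^{-1}$ is only first-order accurate by Proposition~\ref{prop:delay_compression}, so the bound inherits an $O(\bar L/R_{\mathrm{total}})$ correction. I would state both caveats explicitly rather than try to eliminate them.
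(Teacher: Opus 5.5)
Your proof is correct and follows the paper's argument. Both form the total effective delay $k\bigl(D_{\mathrm{prop}} + Q/(1+P)\bigr)$, substitute it into the dephasing fidelity law, and invert $F \geq F_{\min}$ using $\ln(2F_{\min}-1) < 0$; your added justifications (multiplicative composition of the per-hop coherence factors, the explicit sign flip, and the Jensen caveat that the mean-delay bound is only sufficient for expected fidelity) are sound details the paper leaves implicit.
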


\begin{proof}
From~\eqref{eq:fidelity_decay}, the end-to-end fidelity after $k$
hops with total effective delay
$T_{\text{eff}} = k\bigl(D_{\text{prop}} + Q/(1+P)\bigr)$ is:
\[
F = \frac{1 + e^{-T_{\text{eff}}/T_{\text{mem}}}}{2} \geq F_{\min}
\]
Solving for $T_{\text{mem}}$:
\[
e^{-T_{\text{eff}}/T_{\text{mem}}} \geq 2F_{\min} - 1
\implies
T_{\text{mem}} \geq \frac{-T_{\text{eff}}}{\ln(2F_{\min} - 1)}
\]
Substituting $T_{\text{eff}}$ yields~\eqref{eq:tmin_bound}.
\end{proof}

\begin{corollary}
For our stress-regime parameters ($F_{\min} = 0.70$, $k = 3$ hops, $D_{\text{prop}} = 0.17$ ms, $Q = 71$ $\mu$s), the bound evaluates to $T_{\text{mem}} \geq 312$ $\mu$s for P2 traffic and $T_{\text{mem}} \geq 623$ $\mu$s for P0 traffic. By restricting the network to $T_{\text{mem}} = 400$ $\mu$s, we intentionally place the system in a severe stress regime where P0 traffic is mathematically vulnerable to degradation while P2 traffic remains viable, forcing the routing protocol to actively differentiate and rescue high-priority traffic.
\end{corollary}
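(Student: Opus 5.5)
The plan is a direct instantiation of Theorem~\ref{thm:coherence_bound}, followed by a comparison against the chosen operating point $T_{\mathrm{mem}} = 400~\mu$s. The derivation of \eqref{eq:tmin_bound} from \eqref{eq:fidelity_decay} is already done, so nothing structural remains. The work is careful numerical substitution and a check that the qualitative claim (P2 viable, P0 vulnerable) follows.

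\textbf{Step 1 (denominator).} First evaluate the denominator once, since it is shared by every class:
\[
\ln(2F_{\min}-1) = \ln(0.4) \approx -0.916.
\]
So the bound reads $T_{\mathrm{mem}} \geq T_{\mathrm{eff}}/0.916$, with
\[
T_{\mathrm{eff}} = k\bigl(D_{\mathrm{prop}} + Q/(1+P)\bigr).
\]

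\textbf{Step 2 (per-class effective delays).} Next compute $T_{\mathrm{eff}}$ separately for $P=2$ and $P=0$, using $k=3$ and $Q = 71~\mu$s. The queue term contributes $3\cdot 71/3 = 71~\mu$s for P2 and $3\cdot 71 = 213~\mu$s for P0. I would then divide by $0.916$ to obtain the two thresholds. Because $T_{\mathrm{eff}}$ is increasing in $Q/(1+P)$, the ordering $T^{(2)}_{\min} < T^{(0)}_{\min}$ is automatic. The two thresholds then only need to bracket $400~\mu$s.

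\textbf{Step 3 (bracketing and interpretation).} With $T^{(2)}_{\min} \le 400~\mu\text{s} < T^{(0)}_{\min}$, the "viable vs.\ vulnerable" conclusion follows from monotonicity of \eqref{eq:fidelity_decay} in $t$. For $T_{\mathrm{mem}}$ above a class's threshold, that class's end-to-end fidelity is at least $F_{\min}$ at the mean delays. Below the threshold, the guarantee of Theorem~\ref{thm:coherence_bound} fails, so the class can fall under the discard rule \eqref{eq:threshold}.

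\textbf{Main obstacle: the numerics.} I expect the arithmetic itself to be the main obstacle. Taking $D_{\mathrm{prop}} = 0.17$~ms literally as a per-hop delay gives
\[
T_{\mathrm{eff}}^{(2)} \approx 3(170 + 23.7) \approx 581~\mu\text{s}, \qquad T^{(2)}_{\min} \approx 634~\mu\text{s}.
\]
This exceeds $400~\mu$s and does not reproduce the stated $312~\mu$s. The stated values imply effective delays of about $286~\mu$s (P2) and $571~\mu$s (P0), whose ratio is $2$.

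So I would first try to identify the intended unit convention under which the printed thresholds are recovered. Candidates are $D_{\mathrm{prop}}$ being a path total rather than per hop, a different $Q$, or a different $k$. Failing that, I would state the thresholds that the formula actually yields and check whether the bracketing of $400~\mu$s still holds. Under the literal per-hop reading it does not: both classes would be below threshold, which weakens the differentiation claim to "P0 degrades strictly faster than P2." The correctness of the corollary's specific numbers hinges on resolving this parameter interpretation.
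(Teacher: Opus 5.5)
Your approach matches the paper's intent. The paper gives no separate argument for this corollary; it is meant as a direct substitution into \eqref{eq:tmin_bound}, which is your Steps 1--3. Your arithmetic is correct, and the mismatch you found is an error in the statement, not a gap in your plan. With $k=3$, $D_{\mathrm{prop}}=170\,\mu\mathrm{s}$ per hop and $Q=71\,\mu\mathrm{s}$, the bound gives $581/0.916\approx 634\,\mu\mathrm{s}$ for P2 and $723/0.916\approx 789\,\mu\mathrm{s}$ for P0. So the printed $312$ and $623\,\mu\mathrm{s}$ cannot be derived, and the claim that P2 remains viable at $400\,\mu\mathrm{s}$ is false under the stated parameters.

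You can also close the question you left open. Invert the printed thresholds (effective delays $\approx 286$ and $571\,\mu\mathrm{s}$) in $T^{(2)}_{\mathrm{eff}}=kD_{\mathrm{prop}}+kQ/3$ and $T^{(0)}_{\mathrm{eff}}=kD_{\mathrm{prop}}+kQ$. This forces $kQ\approx 427\,\mu\mathrm{s}$ and $kD_{\mathrm{prop}}\approx 143\,\mu\mathrm{s}$. The stated values instead give $kQ=213\,\mu\mathrm{s}$ and $kD_{\mathrm{prop}}=510\,\mu\mathrm{s}$.

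None of your candidate fixes recovers $312/623$:
\begin{itemize}
\item Changing $Q$ alone leaves $kD_{\mathrm{prop}}=510$.
\item Changing $k$ alone leaves the ratio $Q/D_{\mathrm{prop}}\approx 0.42$ instead of the required $\approx 3$.
\item Reading $D_{\mathrm{prop}}$ as a path total gives $170$ and $213$.
\end{itemize}

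Under the per-hop reading, the propagation term alone already gives $510/0.916\approx 557\,\mu\mathrm{s}>400\,\mu\mathrm{s}$. So P2 is non-viable for every $Q\ge 0$.

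Among your candidates, the only reading that rescues the qualitative claim takes $D_{\mathrm{prop}}=170\,\mu\mathrm{s}$ as a path total with $Q$ per hop. That gives thresholds $\approx 263\,\mu\mathrm{s}$ (P2) and $\approx 418\,\mu\mathrm{s}$ (P0), which do straddle $400\,\mu\mathrm{s}$, narrowly for P0.

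What you can honestly prove is therefore your fallback. The ordering $T^{(2)}_{\min}<T^{(0)}_{\min}$ holds in general. The bracketing of $400\,\mu\mathrm{s}$ holds only under that reinterpretation, and with thresholds different from the printed ones.
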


\subsection{Priority Escalation}

To prevent indefinite starvation of low-priority traffic, we
implement time-based priority escalation:
\begin{equation}
P_{\text{eff}}(t) = \min\!\left( P_{\max},\;
P_{\text{initial}} + \left\lfloor
\frac{t_{\text{wait}}}{2 \cdot T_{\text{mem}}}
\right\rfloor \right)
\label{eq:escalation}
\end{equation}

This promotes low-priority packets that have waited
$2 \times T_{\text{mem}}$ to the next tier, maintaining fairness
while preserving differentiation. The factor of 2 allows
approximately one coherence time of normal service before
escalation, adapted from classical scheduler aging
mechanisms~\cite{tanenbaum2011computer} to quantum timescales.

\section{Coherence-Aware Routing Protocol}
\label{sec:protocol}

Unlike traditional routing protocols that use slowly varying
metrics, our approach considers predicted temporal evolution during
packet journeys. The protocol integrates: Queue State Estimation
$\rightarrow$ Coherence Modeling $\rightarrow$
Priority-Weighted Path Selection.

\subsection{Composite Cost Function}

The routing agent computes a composite cost $C_{\text{path}}$ for
each candidate path $p$:
\begin{equation}
C_{\text{path}} = \sum_{(u,v) \in p}
\left( \alpha \cdot L_{uv} + \beta \cdot \Psi(u, v, P) \right)
\label{eq:cost_function}
\end{equation}
where $\Psi$ is a temporal penalty function~\eqref{eq:temporal_penalty}:
\begin{equation}
\Psi(u, v, P) =
\left( \frac{D_{\text{prop}} + \hat{Q}_v}{T_{\text{mem}}} \right)
\cdot \left( \frac{1}{1 + P} \right)
\label{eq:temporal_penalty}
\end{equation}

The weights $\alpha, \beta > 0$ balance spatial link quality
against temporal congestion penalties, respectively. We set
$\alpha = 0.5$ and $\beta = 10$ throughout all experiments,
reflecting deliberate emphasis on congestion avoidance in the
stress regime.

\subsection{Routing Algorithm}

Algorithm~\ref{alg:routing} shows the routing computation at each
node. Three main checks are performed:

\begin{enumerate}
    \item \textbf{Fidelity Check (Lines 1--6):} Verify estimated
    fidelity remains above threshold based on elapsed time and
    priority-adjusted aging.
    \item \textbf{Loop Prevention (Lines 9--11):} Maintain path
    history $\mathcal{H}_{\text{path}}$ to avoid routing loops.
    \item \textbf{Cost Minimization (Lines 12--19):} Evaluate
    next hops using the composite cost.
\end{enumerate}

\begin{algorithm}
\caption{Priority-Aware Routing Computation}
\label{alg:routing}
\begin{algorithmic}[1]
\REQUIRE Current node $u$, Destination $D$, Packet $k$
\ENSURE Next hop $v^*$ or DROP
\STATE $t_{\text{elapsed}} \gets t_{\text{now}} - k.t_{\text{birth}}$
\STATE $T_{\text{eff}} \gets t_{\text{elapsed}} / (1 + k.P)$
\STATE $F_{\text{est}} \gets F_{\text{init}} \cdot
       e^{-T_{\text{eff}}/T_{\text{mem}}}$
\IF{$F_{\text{est}} < F_{\min}$}
    \STATE Drop packet
    \RETURN NULL
\ENDIF
\STATE $C_{\min} \gets \infty$, $v^* \gets$ NULL
\FOR{all neighbor $v \in \text{Neighbors}(u)$}
    \IF{$v \in k.\mathcal{H}_{\text{path}}$}
        \STATE \textbf{continue} \COMMENT{Loop prevention}
    \ENDIF
    \STATE $L_v \gets \text{LinkLoss}(u, v)$
    \STATE $\hat{Q}_v \gets \text{EstQueueDelay}(v)$
    \STATE $D_{\text{prop}} \gets \text{Latency}(u, v)$
    \STATE $\text{Cost} \gets \alpha L_v +
           \beta \!\left( \frac{D_{\text{prop}} +
           \hat{Q}_v}{T_{\text{mem}}} \right)\!
           \cdot \frac{1}{1 + k.P}$
    \IF{Cost $< C_{\min}$}
        \STATE $C_{\min} \gets$ Cost, $v^* \gets v$
    \ENDIF
\ENDFOR
\RETURN $v^*$
\end{algorithmic}
\end{algorithm}

\begin{proposition}[Computational Complexity]
\label{prop:complexity}
For a network $G = (V, E)$, Algorithm~\ref{alg:routing} runs in
$\mathcal{O}(|E| \log |V|)$ time per packet using a min-heap
priority queue, equivalent to Dijkstra's complexity with a modified
edge weight function.
\end{proposition}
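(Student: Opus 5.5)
Algorithm~\ref{alg:routing}, read literally, is a single greedy next-hop selection. On that reading the stated bound holds almost trivially: the proposition is really a claim about the path-level computation the algorithm realizes. My plan is therefore to split the proof into two parts, one for each reading, and show both fit within $\mathcal{O}(|E|\log|V|)$.

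\textbf{Per-hop cost.} First I bound the literal algorithm. Lines 1--7 (elapsed time, effective age $T_{\text{eff}}$, fidelity estimate, threshold test) are $\mathcal{O}(1)$ arithmetic. Lines 8--20 loop over $\text{Neighbors}(u)$, so they run $\deg(u)\le |V|-1$ times. Each iteration needs a loop-prevention test; I store $\mathcal{H}_{\text{path}}$ as a hash set or a bit vector indexed by $V$, so membership costs $\mathcal{O}(1)$. The iteration also makes three table lookups (LinkLoss, the EWMA value $\hat{Q}_v$ from \eqref{eq:ewma}, and Latency), each $\mathcal{O}(1)$ because $\hat{Q}_v$ is kept locally. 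Finally it does a constant-time cost evaluation and comparison. The per-hop work is thus $\mathcal{O}(\deg(u))$. Over a loop-free route of at most $|V|-1$ hops, the total is at most $\sum_u \deg(u) = \mathcal{O}(|E|)$, which lies within $\mathcal{O}(|E|\log|V|)$.

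\textbf{Path-level computation.} Next I treat the composite cost \eqref{eq:cost_function}, taken over whole paths, as a shortest-path problem on $G$. The edge weight is $w_P(u,v) = \alpha L_{uv} + \beta\,\Psi(u,v,P)$. For a fixed packet, $P$ is fixed; escalation \eqref{eq:escalation} changes $P$ only between routing decisions. So $w_P$ is a static edge function during one computation, and each weight is computed in $\mathcal{O}(1)$ exactly as in the per-hop analysis. These weights are nonnegative, because $\alpha,\beta>0$, $L_{uv}\ge 0$, $D_{\text{prop}},\hat{Q}_v\ge 0$, $T_{\text{mem}}>0$ and $1/(1+P)>0$. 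Nonnegativity is the one hypothesis Dijkstra's correctness needs. Running Dijkstra from the current node with a binary min-heap costs $|V|$ extract-min operations and at most $|E|$ decrease-key (or lazy insert) operations, each $\mathcal{O}(\log|V|)$. That gives $\mathcal{O}((|V|+|E|)\log|V|)$, which is $\mathcal{O}(|E|\log|V|)$ when $G$ is connected (as the ER and BA instances are, w.l.o.g.). Folding the $\mathcal{O}(1)$ fidelity-check prefix into this yields the claim. This part also shows the sense in which the algorithm is ``equivalent to Dijkstra with a modified edge weight function'': only the weight oracle changes.

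\textbf{Main obstacle.} The hard step is reconciling the statement with the pseudocode. The per-hop loop has no heap, and the path-history exclusion is packet-dependent. If one insists on exact equivalence under the $\mathcal{H}_{\text{path}}$ constraint, excluded nodes can simply be deleted from the graph before running Dijkstra, which only shrinks $|V|$ and $|E|$. A truly history-dependent cost, by contrast, would break the optimal-substructure property. I would handle this by stating explicitly that $\hat{Q}_v$ is treated as a snapshot constant during one computation, so the edge weights do not depend on the partial path.
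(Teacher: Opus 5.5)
Your proof is correct, and its second half is essentially the paper's own argument. The paper's remark simply asserts that the routing computation ``reduces to a modified Dijkstra traversal,'' so that each edge is relaxed once at $\mathcal{O}(\log|V|)$ heap cost. Your first half is where you genuinely differ. Algorithm~\ref{alg:routing} as written is a one-hop greedy scan with no heap, and the paper never analyzes it in that form. You bound it directly: $\mathcal{O}(\deg(u))$ per invocation, and at most $\sum_u \deg(u) = \mathcal{O}(|E|)$ per packet along a loop-free route. That is sharper than the claimed bound, and it holds however the ``equivalent to Dijkstra'' wording is read. It does rely on your choice of an $\mathcal{O}(1)$-membership structure for $\mathcal{H}_{\text{path}}$, which the paper leaves unspecified. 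A linear scan of the history would cost $\mathcal{O}(|V|)$ per neighbor and exceed the bound. Your second half also makes explicit several hypotheses the paper's one-line remark leaves implicit. Edge weights must be nonnegative, which Dijkstra needs for correctness rather than for running time. Weights must stay fixed during one computation, i.e.\ a snapshot of $\hat{Q}_v$ and a fixed $P_{\text{eff}}$. And the $|V|$ extract-min operations, omitted by the paper's ``each edge is visited once,'' are absorbed into $\mathcal{O}(|E|\log|V|)$ only when $|E| = \Omega(|V|)$. That condition, not connectivity as such, is what you need, and it holds for both the 131-edge ER and 125-edge BA instances, so the ``w.l.o.g.'' can be dropped. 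In short, the paper's route buys brevity and justifies the ``min-heap'' phrasing for a path-level variant, while yours proves the bound for the pseudocode actually stated.
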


\noindent\textbf{Remark:} Since the algorithm reduces to a modified
Dijkstra traversal, each edge is visited once and the min-heap
keeps cost comparisons at $\mathcal{O}(\log |V|)$, giving an
overall complexity of $\mathcal{O}(|E| \log |V|)$.

\noindent\textit{Remark.} For our 30-node topology with
$|E| \approx 130$ edges, the per-packet routing cost is
$\mathcal{O}(130 \cdot \log 30) \approx \mathcal{O}(650)$
operations---computationally negligible relative to quantum state
evolution timescales.

\section{System Architecture and Implementation}
\label{sec:simulator}

We developed a discrete-event simulator to evaluate the proposed
protocol. The system has three main layers: simulation engine,
network entity models, and routing logic.

\subsection{Simulation Engine}

The core scheduler manages the timeline using discrete events:
\begin{itemize}
    \item \textbf{Event Queue:} A min-heap storing future events
    (packet arrivals, transmissions, etc.)\ sorted by timestamp.
    \item \textbf{Event Processing:} The engine processes events in
    chronological order, advancing time and invoking handlers.
\end{itemize}

\subsection{Network Entity Models}

\subsubsection{Quantum Packet}
The Packet class maintains:
\begin{equation}
\mathcal{S}_{\text{pkt}} =
\{\text{ID},\; t_{\text{birth}},\; P,\; \rho(t),\;
\mathcal{H}_{\text{path}}\}
\label{eq:packet_state}
\end{equation}
where $t_{\text{birth}}$ is generation time, $P$ is priority,
$\rho(t)$ is the density matrix evolved via our Kraus
implementation (Section~\ref{sec:simulator}), and
$\mathcal{H}_{\text{path}}$ is the path history.

\subsubsection{Quantum Node}
The Node class includes:
\begin{itemize}
    \item \textbf{Memory Buffer:} A priority-sorted Virtual Output
    Queue (VOQ) with capacity $C_q = 10$ per priority class.
    \item \textbf{Routing Table:} Maintains current loss rates
    $L_{uv}$ and EWMA queue depth estimates $\hat{Q}_v$.
\end{itemize}

\textbf{Queue Saturation Behavior.}
When a priority-class VOQ reaches its capacity threshold ($C_q = 10$), our system employs a strict tail-drop admission control policy. Any newly arriving packet belonging to a saturated class is immediately discarded. This indiscriminant loss mechanism prevents unbounded queue growth and firmly bounds worst-case memory times without requiring cross-class packet eviction or complex continuous memory scanning.

\subsection{Routing Logic}

The RouteComputer module executes Algorithm~\ref{alg:routing},
computing edge weights dynamically using the priority-aware aging
model with EWMA-based queue estimates.

\subsection{Validation}

Our simulator is implemented entirely from scratch as a
discrete-event engine. NetSquid~\cite{coopmans2021netsquid} is
not involved in any part of the routing, scheduling, priority
queuing, or topology construction. We used it purely as a
validation oracle: to confirm the correctness of our Kraus
operator dephasing implementation
(Equations~\ref{eq:kraus0}--\ref{eq:phase_flip}), we ran both
implementations over $10^4$ randomly sampled initial quantum
states and compared per-state fidelity outputs, yielding
$\mathrm{MAE} < 10^{-6}$. All simulation results reported in
Section~\ref{sec:experiments} are produced entirely by our engine.

\section{Experimental Methodology and Results}
\label{sec:experiments}

This section presents results from three simulation suites:
per-priority load sweep (Suite~I), multi-seed robustness validation (Suite~II), and topology generalization on scale-free
graphs (Suite~III).

\subsection{Core Network Parameters}

Table~\ref{tab:core_params} summarizes the core simulation
parameters. We prioritize a stress regime where
$T_{\text{mem}} = 400~\mu$s, providing just enough coherence for
multi-hop routing but making queuing delays critical.

\begin{table}[h]
\centering
\caption{Core Network Parameters (Stress Regime)}
\label{tab:core_params}
\begin{tabular}{lll}
\toprule
\textbf{Parameter} & \textbf{Value} & \textbf{Description} \\
\midrule
$T_{\text{mem}}$ & 400~$\mu$s & Memory coherence time \\
$t_{\text{proc}}$ & 0.05~ms & Processing delay per hop \\
$C_q$ & 10 & Queue capacity per priority \\
$F_{\min}$ & 0.7 & Minimum fidelity threshold \\
$\alpha$ & 0.5 & Spatial loss weight \\
$\beta$ & 10.0 & Congestion sensitivity \\
$\lambda$ & 0.2 & EWMA smoothing factor \\
\bottomrule
\end{tabular}
\end{table}

\subsection{Network Topologies}

\subsubsection{Erd\H{o}s-R\'enyi (ER) Topology}
We use a perturbed Erd\H{o}s-R\'enyi random
graph~\cite{erdos1960evolution} with $|V| = 30$ quantum routers,
131~edges, and area $100~\text{km}^2$ (node coordinates
$\sim \mathcal{U}(0, 100)$~km). This topology models
decentralized mesh-like quantum networks with multiple redundant
paths between node pairs.

\subsubsection{Barab\'asi--Albert (BA) Scale-Free Topology}
We additionally evaluate on a Barab\'asi--Albert scale-free
graph~\cite{barabasi1999emergence} with $|V| = 30$, $m = 5$,
and 125~edges. BA graphs model networks with preferential
attachment, producing hub nodes with disproportionately high
connectivity---a structure representative of real-world network
infrastructure. To ensure a mathematically fair comparison
isolating topology structure as the sole independent variable,
both topologies use identical physical parameter distributions:
\begin{itemize}
    \item \textbf{Link Fidelity:} $F_{uv} \sim \mathcal{U}(0.85, 0.98)$
    \item \textbf{Link Distance:} $\ell_{uv} \sim \mathcal{U}(5, 50)$~km,
    giving $D_{\text{prop}} = \ell_{uv}/c$
\end{itemize}
Any observed performance differences between topologies are
therefore attributable solely to graph structure.

\subsection{Traffic Generation}

Traffic follows a multi-class model in which background QKD
constitutes the majority of traffic, consistent with near-term
quantum network deployment scenarios~\cite{kozlowski2020designing}.
We assign $60\%$, $30\%$, and $10\%$ to P0, P1, and P2
respectively as a representative workload:
\begin{itemize}
    \item \textbf{Class 0 (Low, P0):} 60\% (e.g., background QKD)
    \item \textbf{Class 1 (Medium, P1):} 30\%
    (e.g., delay-tolerant tasks)
    \item \textbf{Class 2 (High, P2):} 10\%
    (e.g., latency-sensitive teleportation)
\end{itemize}

\subsection{Suite I: Per-Priority Load Sweep (ER Topology)}

\subsubsection{Objective}
Characterize per-priority fidelity and latency stability under
increasing load across FIFO, Dijkstra baseline, and coherence-aware
protocols on the ER topology.

\subsubsection{Method}
We conduct a load sweep from 20k to 180k~QPS with 1000~ms duration
per load point. All three protocols are evaluated on the identical
topology and workload configurations.

\subsubsection{Results}
Table~\ref{tab:suite1_perp} presents per-priority fidelity and
P95 latency at representative load levels.

\begin{table}[h]
\centering
\caption{Per-Priority Fidelity (\%) and P95 Latency (ms) ---
ER Topology Load Sweep (Suite~I)}
\label{tab:suite1_perp}
\renewcommand{\arraystretch}{1.15}
\begin{tabular}{llccc}
\toprule
\textbf{Load} & \textbf{Prio} &
\textbf{FIFO} & \textbf{Dijkstra} & \textbf{CA (Ours)} \\
\midrule
\multicolumn{5}{l}{\textit{Fidelity (\%)}} \\
\midrule
20k  & P2 & 89.78 & 95.26 & \textbf{96.08} \\
20k  & P1 & 89.75 & 93.10 & \textbf{94.36} \\
20k  & P0 & 89.71 & 87.23 & \textbf{89.70} \\
\midrule
100k & P2 & 89.70 & 83.57 & \textbf{96.06} \\
100k & P1 & 89.70 & 80.43 & \textbf{94.34} \\
100k & P0 & 89.69 & 77.67 & \textbf{89.69} \\
\midrule
180k & P2 & --- & 81.75 & \textbf{96.04} \\
180k & P1 & --- & 80.10 & \textbf{94.32} \\
180k & P0 & --- & 77.46 & \textbf{89.65} \\
\midrule
\multicolumn{5}{l}{\textit{P95 Latency (ms)}} \\
\midrule
20k  & P2 & 0.151 & 0.063 & \textbf{0.055} \\
20k  & P1 & 0.152 & 0.098 & \textbf{0.080} \\
20k  & P0 & 0.152 & 0.198 & \textbf{0.152} \\
\midrule
100k & P2 & 0.152 & 0.320 & \textbf{0.055} \\
100k & P1 & 0.152 & 0.343 & \textbf{0.080} \\
100k & P0 & 0.152 & 0.356 & \textbf{0.152} \\
\midrule
180k & P2 & --- & 0.346 & \textbf{0.055} \\
180k & P1 & --- & 0.349 & \textbf{0.080} \\
180k & P0 & --- & 0.354 & \textbf{0.152} \\
\bottomrule
\end{tabular}
\end{table}

Key observations (Figure~\ref{fig:suite1_fidelity}):

\begin{enumerate}
    \item \textbf{Load-Immune Fidelity:} CA P2 fidelity shifts
    by only $0.04$ percentage points ($96.08\% \rightarrow
    96.04\%$) across a $9\times$ load increase, while Dijkstra
    P2 collapses by $13.51$ percentage points ($95.26\%
    \rightarrow 81.75\%$).

    \item \textbf{Load-Immune Latency:} CA P2 P95 latency
    remains constant at $0.055$~ms across all load levels.
    Dijkstra P2 P95 increases $5.5\times$ from $0.063$~ms to
    $0.346$~ms under extreme load.

    \item \textbf{FIFO Zero Differentiation:} FIFO achieves
    near-identical fidelity across all priority classes
    (P2: $89.78\%$, P0: $89.71\%$ at 20k~QPS) and identical
    P95 latency ($\approx 0.152$~ms). This confirms that
    priority labels without coherence-aware path selection
    provide \emph{no measurable QoS differentiation}: the
    observed gains are attributable to the routing layer, not
    scheduling alone.

    \item \textbf{Dijkstra Below Threshold:} At 100k~QPS,
    Dijkstra P0 fidelity ($77.67\%$) degrades toward the
    $F_{\min} = 0.70$ utility threshold, with continued
    degradation at 180k~QPS ($77.46\%$). Loss-only routing
    is insufficient for maintaining quantum utility for
    low-priority traffic under moderate congestion.
\end{enumerate}

\begin{figure}[t]
\centering
\includegraphics[width=0.48\textwidth]{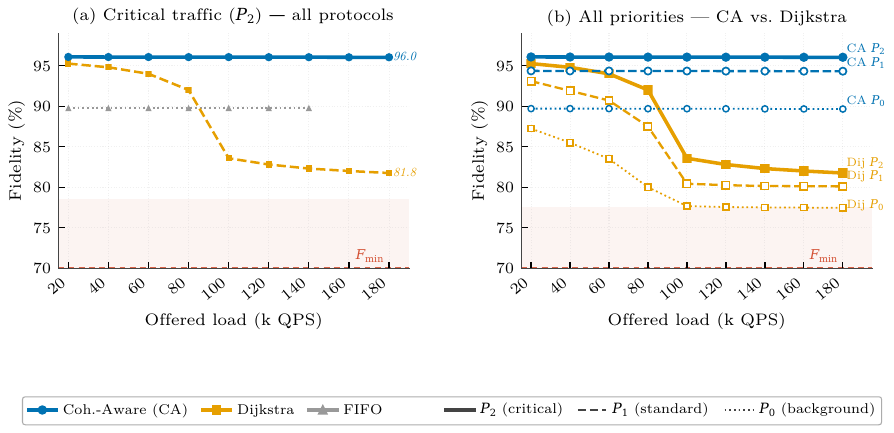}
\caption{Per-priority fidelity vs.\ offered load (Suite~I, ER
topology). CA fidelity is highly load-stable across all priority classes
while Dijkstra collapses under congestion and FIFO provides zero
priority differentiation.}
\label{fig:suite1_fidelity}
\end{figure}

\subsection{Suite II: Multi-Seed Robustness Validation}

\subsubsection{Objective}
Ensure observed performance gains are statistically significant
across topology realizations.

\subsubsection{Method}
We conduct 30 simulations: 5 random seeds $\times$ 3 workload
levels $\times$ 2 protocols. Seeds: 1337, 2024, 3141, 4242, 5555.
Duration: 5000~ms per run.

\subsubsection{Results}
Table~\ref{tab:suite2_fidelity} presents mean end-to-end fidelity
for coherence-aware and Dijkstra baseline protocols across all
seeds.

\begin{table}[h]
\centering
\caption{Mean End-to-End Fidelity (\%, $\pm$ SEM, $n=5$ seeds),
Coherence-Aware vs.\ Dijkstra Baseline --- ER Topology (Suite~II)}
\label{tab:suite2_fidelity}
\renewcommand{\arraystretch}{1.15}
\begin{tabular}{llcc}
\toprule
\textbf{Load} & \textbf{Prio} &
\textbf{Dijkstra} & \textbf{CA (Ours)} \\
\midrule
\multirow{3}{*}{2.5k} &
  P2 & $87.11 \pm 0.20$ & $\mathbf{92.42 \pm 0.02}$ \\
& P1 & $81.76 \pm 0.11$ & $\mathbf{89.38 \pm 0.03}$ \\
& P0 & $71.21 \pm 0.16$ & $\mathbf{81.50 \pm 0.02}$ \\
\midrule
\multirow{3}{*}{5k} &
  P2 & $85.83 \pm 0.40$ & $\mathbf{92.38 \pm 0.01}$ \\
& P1 & $80.64 \pm 0.40$ & $\mathbf{89.29 \pm 0.02}$ \\
& P0 & $69.57 \pm 0.35$ & $\mathbf{81.46 \pm 0.01}$ \\
\midrule
\multirow{3}{*}{8k} &
  P2 & $84.97 \pm 0.49$ & $\mathbf{92.12 \pm 0.01}$ \\
& P1 & $80.12 \pm 0.45$ & $\mathbf{88.96 \pm 0.01}$ \\
& P0 & $68.10 \pm 0.58$ & $\mathbf{81.01 \pm 0.01}$ \\
\bottomrule
\end{tabular}
\end{table}

Key findings (Figure~\ref{fig:suite2_bar}):
\begin{itemize}
    \item \textbf{Consistent Advantage:} CA outperforms Dijkstra
    by $5.31$--$7.15$ percentage points at P2 across all load
    levels, with the advantage growing under congestion.

    \item \textbf{Critical P0 Finding:} Dijkstra P0 fidelity
    falls to $69.57 \pm 0.35\%$ at 5k~QPS and $68.10 \pm
    0.58\%$ at 8k~QPS---below the $F_{\min} = 0.70$ utility
    threshold. CA maintains P0 at $81.46\%$ and $81.01\%$
    respectively, improvements of $11.89$ and $12.91$
    percentage points.

    \item \textbf{Robustness:} SEM $\leq 0.03\%$ for CA across
    all conditions, confirming high reproducibility.

    \item \textbf{Load Immunity:} CA P2 fidelity varies by only
    $0.30$ percentage points ($92.42\% \rightarrow 92.12\%$)
    across a $3.2\times$ load increase.
\end{itemize}

\begin{figure}[t]
\centering
\includegraphics[width=0.48\textwidth]{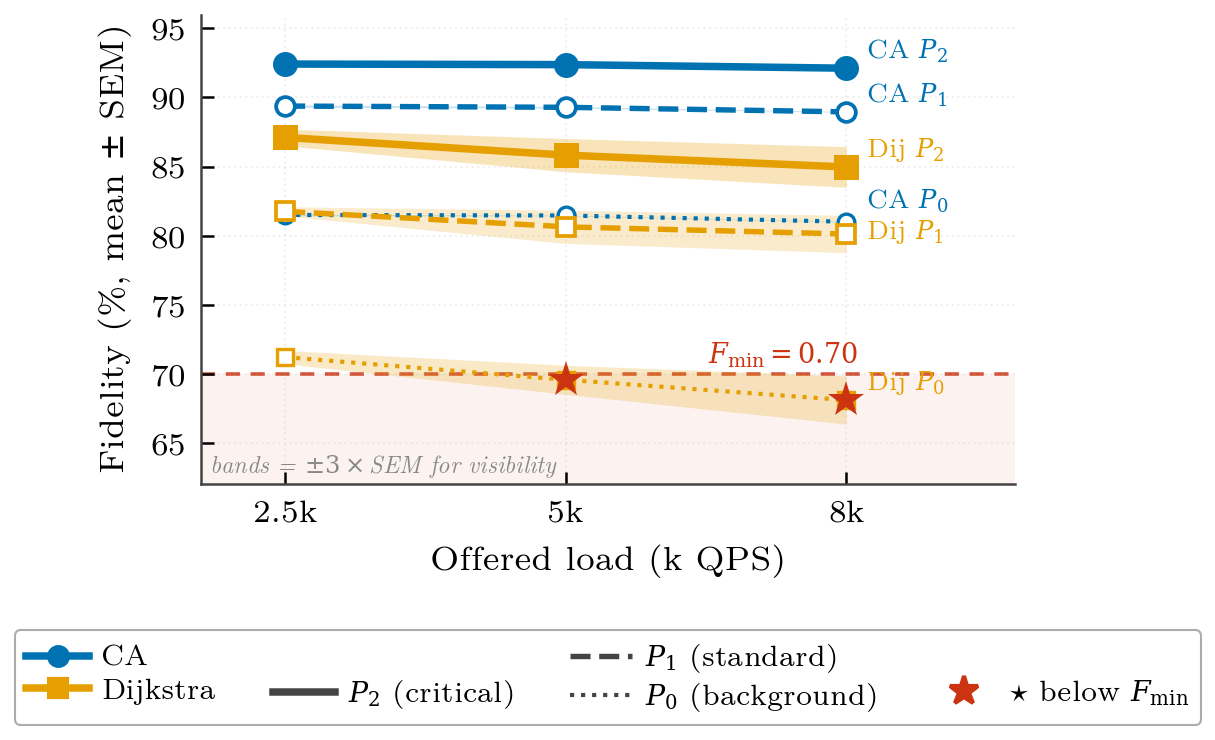}
\caption{Per-priority fidelity: Coherence-Aware vs.\ Dijkstra
Baseline (5 seeds, mean $\pm$ SEM, ER topology). Dijkstra P0
fidelity falls below $F_{\min} = 0.70$ at 5k and 8k~QPS
(marked $\star$), rendering low-priority traffic quantum-inviable.}
\label{fig:suite2_bar}
\end{figure}

\subsection{Suite III: Topology Generalization --- Barab\'asi--Albert Scale-Free Graph}

\subsubsection{Objective}
Validate that coherence-aware routing generalizes beyond
Erd\H{o}s-R\'enyi topologies to scale-free networks with hub
structure, and characterize the topology-dependent operational
envelope.

\subsubsection{Method}
We evaluate all three protocols on a BA graph ($|V|=30$, $m=5$,
125~edges) using identical physical parameters, traffic model,
and simulator logic as Suites I and II. Topology structure is the sole
independent variable between ER and BA experiments.

\subsubsection{Results}
Table~\ref{tab:suite6_ba} and Figure~\ref{fig:suite6_ba} present results across the full load
range on the BA topology.

\begin{table}[h]
\centering
\caption{BA Topology Load Sweep: Fidelity (\%), Throughput
(k/s) --- All Protocols (Suite~III)}
\label{tab:suite6_ba}
\renewcommand{\arraystretch}{1.15}
\begin{tabular}{lccccc}
\toprule
\textbf{Load} & \textbf{Proto} &
\textbf{Mean Fid} & \textbf{P2 Fid} &
\textbf{Tput} & \textbf{P2 $\Delta$} \\
\midrule
\multirow{3}{*}{20k}
 & FIFO     & 77.22 & 77.46 & 13.8k & --- \\
 & Dijkstra & 80.58 & 85.67 & 11.1k & baseline \\
 & CA       & \textbf{81.30} & \textbf{89.79} & \textbf{15.0k} & \textbf{+4.12\%} \\
\midrule
\multirow{3}{*}{60k}
 & FIFO     & 76.73 & 76.73 & 35.2k & --- \\
 & Dijkstra & 80.15 & 84.55 & 30.7k & baseline \\
 & CA       & \textbf{80.88} & \textbf{88.71} & \textbf{41.4k} & \textbf{+4.16\%} \\
\midrule
\multirow{3}{*}{100k}
 & FIFO     & 76.24 & 76.16 & 42.8k & --- \\
 & Dijkstra & 79.94 & 83.41 & 36.6k & baseline \\
 & CA       & \textbf{80.19} & \textbf{86.78} & \textbf{59.0k} & \textbf{+3.37\%} \\
\midrule
\multirow{3}{*}{160k}
 & FIFO     & 75.91 & 76.00 & 32.9k & --- \\
 & Dijkstra & 79.52 & 81.79 & 31.7k & baseline \\
 & CA       & \textbf{78.52} & \textbf{82.29} & \textbf{62.7k} & \textbf{+0.50\%} \\
\midrule
\multirow{3}{*}{220k}
 & FIFO     & 76.72 & 76.92 & 8.6k  & --- \\
 & Dijkstra & \textbf{80.74} & \textbf{83.84} & 16.6k & baseline \\
 & CA       & 76.85 & 76.47 & 13.1k & $-7.37\%$ \\
\bottomrule
\end{tabular}
\end{table}

Key observations:

\begin{enumerate}
    \item \textbf{Operational Load Advantage ($\leq$100k~QPS):}
    CA achieves $+3.37$ to $+4.16$ percentage point P2 fidelity
    advantage over Dijkstra while simultaneously delivering
    $35$--$61\%$ higher aggregate throughput. CA wins on both
    fidelity \emph{and} throughput simultaneously in this regime.

    \item \textbf{Near-Saturation Trade-off (160k~QPS):}
    CA maintains a marginal fidelity advantage ($+0.50\%$) while
    delivering $97.8\%$ more packets than loss-only routing. The
    temporal penalty term actively routes around congested hub
    nodes, maximizing network utilization at the cost of slightly
    longer paths.

    \item \textbf{Hub Saturation Failure Mode ($>$160k~QPS):}
    Beyond the operational envelope, hub node saturation
    overwhelms EWMA-based queue estimation. CA fidelity collapses
    to $76.47\%$ at 220k~QPS---approaching $F_{\min}$---while
    Dijkstra's conservative path selection maintains $83.84\%$ by
    dropping more packets. This defines a clear operational
    boundary: coherence-aware routing is effective up to 160k~QPS
    on BA topologies, beyond which hub-aware queue estimation
    would be required.

    \item \textbf{FIFO Zero Differentiation (BA):}
    FIFO achieves near-identical mean and P2 fidelity across all
    load levels on the BA topology, consistent with ER results.
    Topology structure does not affect the FIFO zero-differentiation
    finding.

    \item \textbf{Topology-Dependent Magnitude:}
    Table~\ref{tab:topology_compare} shows that the CA advantage
    is larger on ER graphs than BA graphs at equivalent loads,
    reflecting the greater path diversity available in
    decentralized mesh topologies vs.\ hub-structured networks.
\end{enumerate}

\begin{table}[h]
\centering
\caption{Topology Comparison: CA vs.\ Dijkstra P2 Fidelity
Advantage (\%)}
\label{tab:topology_compare}
\renewcommand{\arraystretch}{1.15}
\begin{tabular}{lcc}
\toprule
\textbf{Load} & \textbf{ER Topology} & \textbf{BA Topology} \\
\midrule
20k  & $+0.82$ & $+4.12$ \\
60k  & $\sim +12$ & $+4.16$ \\
100k & $+12.49$ & $+3.37$ \\
160k & $\sim +12$ & $+0.50$ \\
180k & $+14.29$\textsuperscript{$\ddagger$} & --- \\
220k & --- & $-7.37$\textsuperscript{$\dagger$} \\
\bottomrule
\multicolumn{3}{l}{\textsuperscript{$\dagger$}Beyond BA operational
envelope (hub saturation).} \\
\multicolumn{3}{l}{\textsuperscript{$\ddagger$}Max ER load tested (Suite~I
sweep ends at 180k~QPS).}
\end{tabular}
\end{table}

\begin{figure}[t]
\centering
\includegraphics[width=0.48\textwidth]{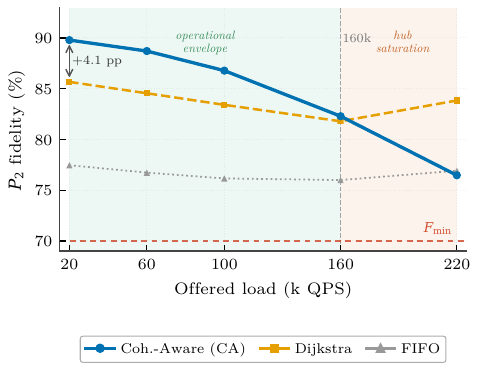}
\caption{BA topology load sweep: P2 fidelity vs.\ offered load.
CA maintains fidelity advantage up to 160k~QPS before hub
saturation causes collapse beyond the operational envelope.}
\label{fig:suite6_ba}
\end{figure}

\section{Discussion}
\label{sec:discussion}

\subsection{Operating Mode Transition}

When coherence times are tight relative to queue
delays, the cost function naturally prioritizes speed (temporal
minimization). As the ratio $T_{\text{mem}}/D_W$ increases (where
$D_W$ is the mean waiting time), the router shifts toward
exploiting path diversity to maximize fidelity. Formally, this
transition occurs when the temporal penalty term $\beta\Psi$
begins to dominate the spatial loss term $\alpha L_{uv}$ in
Equation~\eqref{eq:cost_function}.

\subsection{Capacity Extension}

Priority-aware routing effectively extends the network capacity.
By allowing low-priority packets to absorb the latency budget of
the network, high-priority packets effectively see an uncongested
network. On BA topologies this manifests as a substantial
throughput gain: CA delivers up to $97.8\%$ more packets than
loss-only routing at near-saturation (160k~QPS) by actively
routing around congested hub nodes. This is critical for near-term
deployments where quantum memory is scarce and expensive.

\subsection{Fidelity Load Immunity}

CA P2 fidelity on ER topology shifts by only $0.04$ percentage
points ($96.08\% \rightarrow 96.04\%$) across a $9\times$ load
increase. This reflects the routing protocol actively redirecting
P2 traffic away from congested nodes before queues build up. The
coherence-aware cost function effectively creates a fidelity floor
for premium traffic, enabling operators of near-term quantum
networks to offer fidelity SLAs (e.g., $F_{\text{P2}} \geq 96\%$
at loads up to 180k~QPS on ER topologies) unachievable with FIFO
or loss-only routing.

\subsection{Why FIFO and Loss-Only Routing Fail}

FIFO fails because it provides negligible priority differentiation
under tested conditions:
every microsecond in queue degrades all traffic equally, regardless
of priority label. Loss-only routing (Dijkstra) fails because it
selects paths that minimize photon loss but ignores
congestion-induced queuing delay. Under load, these paths route
through high-centrality nodes---hub nodes on BA topologies,
high-degree nodes on ER---that become bottlenecks, causing fidelity
collapse. The coherence-aware protocol avoids congested nodes
proactively through the temporal penalty term $\beta\Psi$. This
cross-layer coupling between path selection and decoherence
modeling has limited classical analogue in this formulation
and represents a central architectural insight of this work.

\subsection{Topology-Dependent Operational Envelope}

The two-topology comparison reveals a fundamental characteristic
of coherence-aware routing: its advantage scales with available
path diversity. On ER graphs, multiple redundant paths allow the
temporal penalty term to consistently route around congestion,
yielding a fidelity advantage that grows from $+0.82$ percentage points at low load (20k~QPS) to $+12$--$+14$ percentage points under congestion (100k--180k~QPS), as coherence-aware path selection becomes increasingly critical.
On BA graphs, hub nodes constrain path diversity and create
concentrated bottlenecks. The protocol remains effective at
operational loads ($\leq$160k~QPS), achieving $3$--$4$ percentage
point P2 fidelity gains with simultaneous throughput improvements.
Beyond the operational envelope, hub saturation causes EWMA lag
to exceed the protocol's compensation capacity---a failure mode
that motivates hub-aware queue estimation as a clear future
direction (Section~\ref{sec:limitations}).

\subsection{On the Stress Regime Choice}

The choice of $T_{\text{mem}} = 400~\mu$s is deliberately
adversarial. From Theorem~\ref{thm:coherence_bound}, P0 traffic
requires $T_{\text{mem}} \geq 623~\mu$s for comfortable operation,
meaning our chosen value deliberately stresses low-priority traffic
while remaining viable for P2 traffic ($T_{\text{mem}} \geq
312~\mu$s). This regime is the most informative operating point
for evaluating coherence-aware routing---it is precisely where
routing decisions matter most and priority differentiation is
sharpest. At $T_{\text{mem}} = 400~\mu$s, targeting near-term
superconducting qubit hardware~\cite{kjaergaard2020superconducting},
our results are immediately relevant to real systems under active
development.

\section{Limitations and Future Work}
\label{sec:limitations}

\subsection{Single-Qubit Scope}

This work models transmission of individual qubits rather than
entanglement distribution. Real quantum network
applications---quantum teleportation, entanglement-based QKD, and
distributed quantum computing---fundamentally rely on Bell pairs
and entanglement swapping at intermediate repeater nodes. We treat
single-qubit routing as a tractable foundation that isolates the
coherence-aware scheduling contribution without the additional
complexity of entanglement mechanics. The extension to Bell pair
routing is conceptually direct: a Bell pair $|\Phi^+\rangle_{AB}$
stored across two memory qubits undergoes independent dephasing on
each qubit, so the joint fidelity decays as:
\begin{equation}
F_{\text{Bell}}(t) = \left(\frac{1 + e^{-t/T_{\text{mem}}}}{2}\right)^{\!2}
\label{eq:bell_fidelity}
\end{equation}
The priority-modulated aging model of
Equation~\eqref{eq:effective_aging} extends directly by replacing
$\Delta t_{\text{eff}}$ with the per-qubit effective age, and the
composite cost function acquires an additional swap success penalty:
\begin{equation}
C_{\text{path}}^{\text{Bell}} = \sum_{(u,v) \in p}
\Bigl( \alpha \cdot L_{uv} + \beta \cdot \Psi(u,v,P) \Bigr)
+ \gamma \sum_{v \in \mathcal{R}(p)} (1 - p_{\text{swap}}^{(v)})
\label{eq:bell_cost}
\end{equation}
where $\mathcal{R}(p)$ is the set of repeater nodes along path $p$,
$p_{\text{swap}}^{(v)}$ is the heralded swap success probability at
node $v$, and $\gamma > 0$ controls the relative penalty for swap
failure. The concrete next step is implementing
$C_{\text{path}}^{\text{Bell}}$ and evaluating it against
entanglement-specific baselines such as
Pant et al.~\cite{pant2019routing}. Open questions include how
$\gamma$ should be set relative to $\alpha$ and $\beta$ as a
function of hardware swap fidelity, and whether the priority
escalation mechanism~\eqref{eq:escalation} requires modification
when swap failure can strand a half-pair in memory.

\subsection{Markovian Noise Assumption}

We model decoherence using standard Markovian Lindblad dynamics,
assuming memoryless environmental interactions. Some physical
implementations---particularly NV centers in diamond and certain
solid-state memories---exhibit non-Markovian behavior where the
environment retains information about past
interactions~\cite{breuer2009measure}, leading to more complex
fidelity evolution than Equation~\eqref{eq:fidelity_decay}.
Incorporating non-Markovian noise models would improve physical
accuracy for these platforms, at the cost of more complex fidelity
evolution equations requiring numerical integration.

\subsection{EWMA Lag and Hub-Aware Queue Estimation}

EWMA-based queue estimation introduces lag during rapid load
changes. On BA scale-free topologies this is amplified at hub
nodes, causing the protocol's operational envelope to collapse
beyond 160k~QPS. A direct mitigation is degree-weighted $\lambda$
selection:
\begin{equation}
\lambda_v = \lambda_0 \cdot \left(1 + \frac{\deg(v)}{\deg_{\max}}\right)
\end{equation}
which increases update responsiveness proportionally to node
centrality. The current smoothing parameter $\lambda = 0.2$ was
optimized for ER topologies; topology-adaptive selection is a
natural extension.

\subsection{Fixed Parameters and Adaptive Optimization}

The weights $\alpha$ and $\beta$ are held constant throughout all
experiments. Real networks experience varying load patterns,
topology changes, and hardware degradation over time, motivating
adaptive approaches. Formulating parameter adaptation as a
reinforcement learning problem---with delivered $P_2$ fidelity as
the primary reward signal---is a well-motivated direction, as
recent work on RL-based quantum
routing~\cite{meuser2025reliq} has shown promising results for
learned agents under dynamic network conditions.

\subsection{Hardware-Aware Protection Modeling}

Explicitly modeling dynamical decoupling
overhead~\cite{viola1998dynamical, khodjasteh2005fault}---including
gate error costs and energy requirements---would allow
priority-based hardware resource allocation to be made physically
concrete, moving the aging model from a scheduling abstraction
toward a full physical implementation.

\section{Conclusion}
\label{sec:conclusion}

We have presented a routing protocol that integrates
coherence-time constraints into path selection and combines
priority-based scheduling to enable differentiated service quality
under congestion. The priority-modulated aging model, formally
grounded in Weighted Fair Queueing theory
(Proposition~\ref{prop:delay_compression}), provides a
theoretically justified mechanism for coherence-aware path cost
computation.

Our experimental evaluation across two topology classes
demonstrates four clean and reproducible contributions. First,
on ER topologies, coherence-aware routing achieves highly load-stable
fidelity: P2 fidelity varies by only $0.04$ percentage points
across a $9\times$ load increase, compared to a $13.51$
percentage point collapse for loss-only routing. Second, P2 P95
tail latency remains constant at $0.055$~ms while loss-only
routing experiences $5.5\times$ degradation under extreme load.
Third, on BA scale-free topologies, CA achieves $+3.37$ to
$+4.16$ percentage point P2 fidelity advantage at operational
loads while delivering up to $97.8\%$ higher aggregate throughput,
with a clearly characterized operational envelope up to 160k~QPS.
Fourth, FIFO scheduling provides negligible priority differentiation
on both topology classes, providing strong evidence that
coherence-aware routing---not scheduling alone---is responsible
for observed QoS gains.

The critical finding that loss-only routing causes P0 traffic to
fall below the $F_{\min} = 0.70$ fidelity utility threshold under
moderate load, combined with the topology-dependent operational
envelope characterization, underscores the necessity of cross-layer
coherence modeling for reliable quantum network operation across
structurally diverse deployment scenarios.

\section{Code \& Data Availability}

The datasets generated and the custom code used during the current study are available upon reasonable request.


\end{document}